\theoremstyle{definition}
\newtheorem{definition}{Definition}
\newtheorem{theorem}{Theorem}
\def\BibTeX{{\rm B\kern-.05em{\sc i\kern-.025em b}\kern-.08em
    T\kern-.1667em\lower.7ex\hbox{E}\kern-.125emX}}
\begin{document}

\title{Evaluating Company-specific Biases in Financial Sentiment Analysis using Large Language Models}

\author{
\IEEEauthorblockN{Kei Nakagawa}
\IEEEauthorblockA{\textit{Innovation Lab} \\
\textit{Nomura Asset Management Co, Ltd.}\\
Tokyo, Japan \\
kei.nak.0315@gmail.com}
\and
\IEEEauthorblockN{Masanori Hirano}
\IEEEauthorblockA{
\textit{Preferred Networks, Inc.}\\
Tokyo, Japan \\
research@mhirano.jp}
\and
\IEEEauthorblockN{Yugo Fujimoto}
\IEEEauthorblockA{\textit{Innovation Lab} \\
\textit{Nomura Asset Management Co, Ltd.}\\
Tokyo, Japan \\
yu5fujimoto@gmail.com}
}
\maketitle

\begin{abstract}
This study aims to evaluate the sentiment of financial texts using large language models~(LLMs) and to empirically determine whether LLMs exhibit company-specific biases in sentiment analysis. Specifically, we examine the impact of general knowledge about firms on the sentiment measurement of texts by LLMs.
Firstly, we compare the sentiment scores of financial texts by LLMs when the company name is explicitly included in the prompt versus when it is not. We define and quantify company-specific bias as the difference between these scores.
Next, we construct an economic model to theoretically evaluate the impact of sentiment bias on investor behavior. This model helps us understand how biased LLM investments, when widespread, can distort stock prices.
This implies the potential impact on stock prices if investments driven by biased LLMs become dominant in the future.
Finally, we conduct an empirical analysis using Japanese financial text data to examine the relationship between firm-specific sentiment bias, corporate characteristics, and stock performance.
\end{abstract}

\begin{IEEEkeywords}
large language model, sentiment analysis, bias, financial text mining
\end{IEEEkeywords}

\section{Introduction}
\label{sec:introduction}

With the advancement of natural language processing~(NLP) technology, financial and economic text mining—such as analyzing financial statements, analyst reports, social media texts and newspaper articles—has become an essential tool for investment decisions, understanding economic trends, and improving operational efficiency in financial institutions\cite{loughran2020textual,gupta2020comprehensive}. 
In particular, sentiment analysis of financial and economic text data has played a crucial role in understanding investor decision-making and market trends. 
The use of large language models~(LLMs) in sentiment analysis has raised expectations for higher quality sentiment evaluation~\cite{li2023large,deng2023llms}.

However, since financial and investment decisions typically have significant risk, both accuracy and reliability are required in predictions~\cite{nakagawa2018deep,cao2022ai}. 
Therefore, ensuring that LLM outputs are accurate and trustworthy remains a major challenge~\cite{zhao2024revolutionizing,zhou2024large}. 
Moreover, LLMs inherently contain biases related to race, gender, and socioeconomic disparities\cite{gallegos2024bias,rutinowski2024self}, which can prevent fair decision-making and compromise their reliability, potentially leading to significant consequences\cite{jeoung2023stereomap}.

Despite this, there hasn't been enough investigation into whether LLMs have specific biases toward certain companies in their outputs. Since LLMs are trained on large datasets, the information and general opinions about companies within this data could affect the model's results. If these biases exist, the model might consistently produce skewed sentiment scores for certain companies. For example, some companies might get more positive evaluations because of their past performance or media coverage, while others might be more likely to receive negative evaluations~\cite{tetlock2008more,ferguson2015media}.

This issue is critically important for the following reasons: If investors trust the model's output and make decisions based on it, company-specific biases could inappropriately influence investor behavior and market prices. 
Furthermore, in trading algorithms and risk management strategies that utilize sentiment analysis, such biases could distort outcomes\cite{hirshleifer2015behavioral,kumar2015behavioural,manabe2022value}.
Being aware of these challenges can help practitioners make their algorithms and systems more reliable and trustworthy\cite{chuang2022buy}.
Therefore, it is necessary to determine the extent to which LLMs exhibit company-specific biases in sentiment analysis and to quantitatively assess their impact.

Therefore, we aim to empirically determine whether LLMs exhibit company-specific biases in sentiment analysis of financial texts. Specifically, we examine the impact of general knowledge about companies on sentiment measurement by using performance-related texts from financial statements(see Fig. \ref{fig:concept_company_bias}). 
We define and quantify company-specific bias as the difference in sentiment scores when a company’s name is explicitly included in the prompt versus when it is not. 
By systematically changing whether or not company names are included in the prompts, we can observe how these biases appear and measure their strength. This method allows us to empirically show the presence of company-specific sentiment bias in LLMs, giving us insights into how these models might be influenced by existing views of certain companies.

We employ multiple LLMs in the sentiment analysis to determine the extent of bias across different models. By comparing the results from these different models, we can identify patterns of bias that are common across LLMs or unique to specific models, thereby highlighting the importance of model bias in financial analysis. Subsequently, we construct an economic model to theoretically evaluate the impact of company-specific sentiment bias on investor behavior. 
Our model assumes a market where both biased and unbiased investors coexist, aiming to analyze how bias affects stock prices at equilibrium in such a market. 
This model helps us understand how biased LLM investments, when widespread, can distort stock prices.
Finally, we conduct an empirical analysis using Japanese real-world financial text data to identify companies that are more likely to experience company-specific sentiment bias through exposure analysis. We detect any systematic biases in the sentiment scores attributed to specific companies.
Additionally, we examine the impact of these biases on stock performance, exploring whether biased sentiment assessments correlate with actual stock returns.

\section{Related Work}
\subsection{Sentiment Analysis}
Sentiment analysis methods proposed so far in finance  fields can be divided into two main categories: dictionary-based \cite{tetlock2008more, Govindaraj2008} and machine-learning-based \cite{pang-etal-2002-thumbs, hoang-etal-2019-aspect,Sousa2019,zhang-etal-2024-sentiment} methods.

In the former approach, Tetlock \cite{tetlock2008more} examined whether information in news articles could predict future corporate profits and returns by focusing on individual firms. 
The results show that the greater the number of negative words in the news, the lower the future returns of the firm. 
Govindaraj \cite{Govindaraj2008} analyzed tone changes in a specific MD\&A section of Form 10-K, a disclosure document of the company, and demonstrated that these tone changes affect future stock returns. 
They classified some words as positive or negative and then counted the number of positive and negative words in the input articles to estimate their frequency of occurrence. While dictionary-based sentiment evaluation is advantageous in terms of transparency, it has the limitation of struggling with negative or complex sentences.

The later approach is machine-learning-based methods, which were initially proposed using naive Bayes and SVM \cite{pang-etal-2002-thumbs}. 
Among these, the BERT model\cite{devlin2019bert}, which accounts for structural features of sentences including negative sentences, has shown high accuracy in sentiment analysis \cite{hoang-etal-2019-aspect}. 
\cite{Sousa2019} analyzed the sentiment of economic news using fine-tuned BERT models and demonstrated that these models could predict future movements of the Dow Jones Industrial Index. 
Recently, the effectiveness of LLMs has also been reported \cite{zhang-etal-2024-sentiment}. They analyzed the performance of LLMs across a range of sentiment analysis tasks, varying the model used and the form of the prompt. 
Their results suggest that LLMs exhibit competitive or even superior performance compared to fine-tuned smaller language models in sentiment classification tasks. It has also been shown that few-shot learning further improves performance, making it an effective tool in LLM-based sentiment analysis.

Several studies have also been conducted on Japanese economic texts, which is the focus of this study. 
For dictionary-based methods, \cite{sashida2021stock,nakagawa2022investment} focused on the Supervised Sentiment Extraction via Screening and Topic Modeling~(SSESTM\cite{ke2019predicting}) which can make dictionary specialized for stock return forecasting.
They confirmed that the SSESTM model using four years of Japanese articles in the training data gave relatively good results in Japanese stock market.
\cite{nakagawa2022investment} exploited the lead-lag relationship of the sentiments across causally linked companies.
For machine-learning-based methods, \cite{ito2018text}
proposed an interpretable neural network(NN) architecture called gradient interpretable NN~(GINN).
The GINN can visualize both the market sentiment score from a whole financial document and the sentiment gradient scores in concept units. 
They experimentally demonstrated the validity of  text visualization produced by GINN using a real Japanese text dataset.

\subsection{Bias in Language Models}
The increasing use of machine learning models in recent years has highlighted significant challenges related to the biases embedded within these models, both in societal and practical contexts. Extensive research has been conducted on this issue \cite{mehrabi2022surveybiasfairnessmachine}.

Recent studies have explored various types of biases in LLMs. 
Ferara categorized these biases into six types: Demographic, Cultural, Linguistic, Temporal, Confirmation, and Ideological \& Political. They suggested that examining the ethical, social, and practical consequences of each type of bias could contribute to the responsible development and use of language models. Additionally, numerous measures of bias and datasets for assessing bias have been proposed \cite{gallegos2024biasfairnesslargelanguage}.

One approach for estimating bias is the Masked Token method, which involves replacing tokens and comparing the outputs \cite{kurita-etal-2019-measuring, webster2021measuringreducinggenderedcorrelations, jentzsch-turan-2022-gender}. Jentzsch et al. proposed this method for evaluating biases in sentiment analysis tasks. They generated new sentences by replacing gender-specific terms (e.g., "He" with "She") and calculated the difference between the outputs of the original and modified sentences.

Our method employs a similar concept to assess bias but focuses on named entities. Since it is challenging to find direct antonyms for named entities (e.g., replacing "Apple Inc."), our approach involves generating sentences by omitting the corresponding named entities (i.e., company names) and measuring the differences in the outputs.

\subsection{The Knowledge of LLMs about Named Entities}
The distinction between LLMs and smaller language models has been attributed to their level of common knowledge \cite{li-etal-2022-systematic}. This common knowledge, which includes prior knowledge about named entities, is heavily reliant on the training data \cite{chang2023languagemodelbehaviorcomprehensive}.

While this prior knowledge can enhance LLMs performance, it also risks introducing biases. Although biases in general expressions, such as those related to gender and age, have been extensively studied, these studies primarily focus on common expressions like personal pronouns and racial terms. There has been limited research on biases associated with named entities, such as "John" or "Sofia," in Named Entity Recognition \cite{mehrabi2019manpersonwomanlocation}. Thus, there remains a gap in research concerning bias in named entities.

To the best of our knowledge, this study is the first to address bias in named entities, particularly within the context of economic texts.

\section{Company-specific bias}
\label{sec:bias}

In this section, we define the company-specific biases in sentiment analysis.
For each company, we compare the sentiment scores obtained from prompts that either include or exclude the company name. 

\begin{itembox}[t]{\textbf{Prompt \underline{without} Company Name}}
You are a financial analyst.\
Below is a sentence describing financial performance.\
Please rate the sentiment of this sentence on a scale from 1 (bad) to 5 (good).\
Please output only the sentiment score.\
\{\underline{Texts on financial performance}\}
\end{itembox}

\begin{figure}
    \centering
    \includegraphics[width=\linewidth]{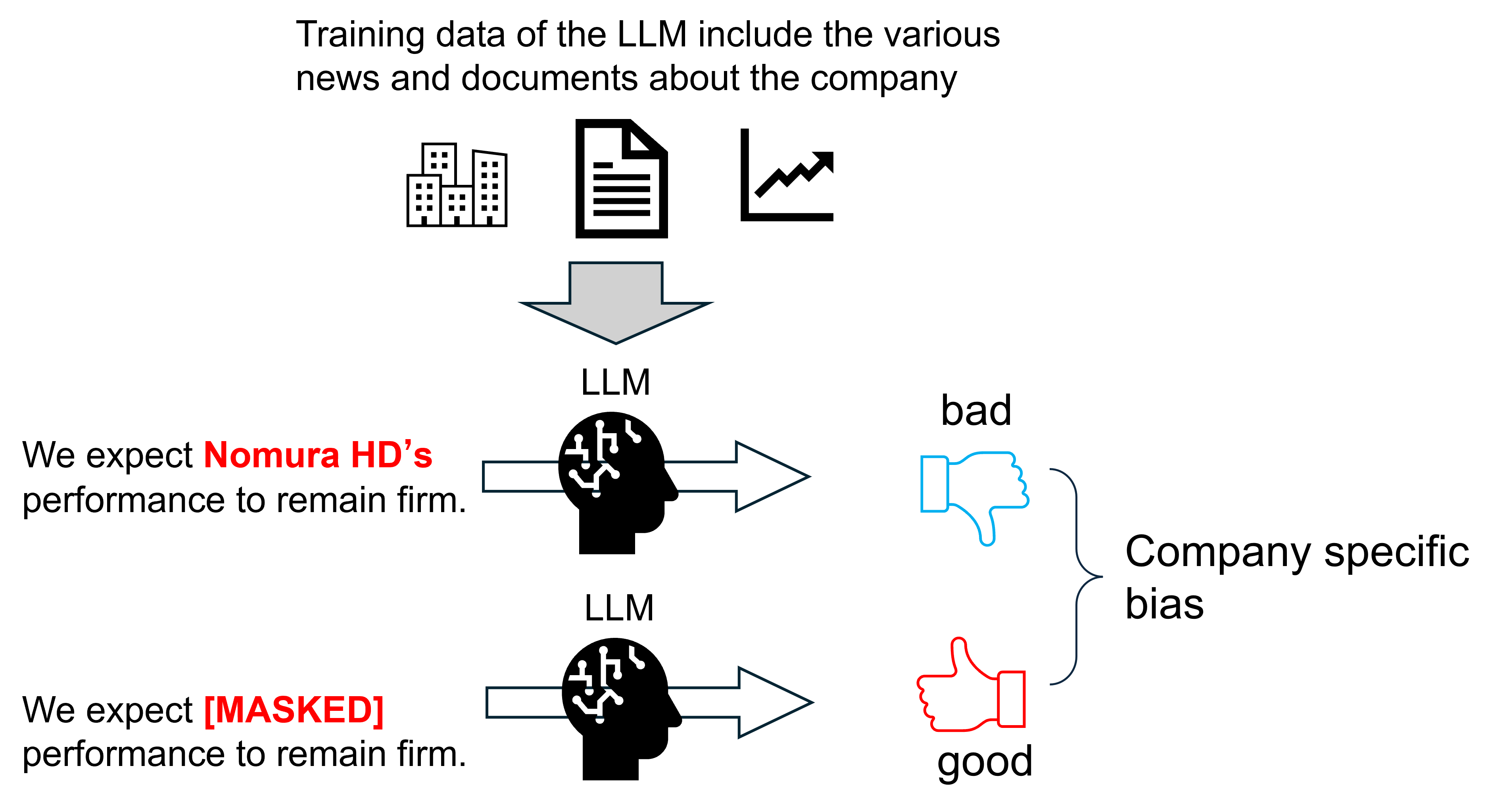}
    \caption{Concept of Company-Specific Bias}
    \label{fig:concept_company_bias}
\end{figure}

\begin{itembox}[t]{\textbf{Prompt \underline{with} Company Name}}
You are a financial analyst.\
Below is a sentence describing the financial performance of \{\underline{Company Name}\}.\
Please rate the sentiment of this sentence on a scale from 1 (bad) to 5 (good).\
Please output only the sentiment score.\
\{\underline{Texts on financial performance}\}
\end{itembox}

Let $s_u$ represent the sentiment score from the prompt without the company name, and $s_b$ represent the score from the prompt with the company name. 

\begin{definition}[company-specific bias]
We define the company-specific bias $\beta$ using the following equation:
\begin{align}\label{eq:bias}
\beta = s_b - s_u
\end{align}
\end{definition}

This bias value $\beta$ provides a quantitative measure of the influence that the inclusion of a company name has on the sentiment analysis. If $\beta$ is positive, it suggests that the LLM exhibits a positive bias toward the company, meaning it assigns a more favorable sentiment when the company name is included. Conversely, if $\beta$ is negative, it indicates that the LLM shows a negative bias, assigning a less favorable sentiment when the company name is included. 

\subsection{Research Questions}

In this study, we aim to address the following research questions on company-specific biases as defined Equation~\ref{eq:bias}:

\textbf{RQ1:~Do we find company-specific bias in sentiment analysis using LLMs?}
We seek to empirically determine whether various LLMs display bias towards specific companies in their sentiment evaluations. 

\textbf{RQ2:~What characteristics are associated with companies that exhibit bias?}
We explore the specific attributes of companies that are prone to bias in sentiment analysis by LLMs. We analyze the characteristics of companies that receive either positive or negative biases and understand how these biases correlate with various company characteristics .

\textbf{RQ3:~How does company-specific bias impact stock performance?}
We investigate the potential impact of company-specific biases on stock performance. By conducting an empirical analysis using real financial data, we examine whether biases in sentiment analysis by LLMs translate into measurable effects on stock prices, particularly in terms of abnormal returns following financial disclosures.

\section{Theoretical Analysis}
\label{sec:theory}

In this section, to get insight of RQ3, we develop an economic model to assess how company-specific sentiment bias influences investor behavior and impacts stock prices in a market with both biased and unbiased investors.

Following the framework of DeLong \textit{et al.} \cite{de1990noise}, we assume that the market consists of two types of investors: those without bias and those with bias. The unbiased investors consider sentiment in their investment decisions, while the biased investors trade based on sentiment bias, using biased sentiment in their decisions. Both types of investors, in the proportions of $1-\mu$ and $\mu$, respectively, make decisions (construct portfolios) at time $t$ and realize payoffs at time $t+1$. The market consists of two types of assets: a risk-free asset that provides a certain return $r$, and a risky asset with an uncertain payoff that provides a real return of $r$. The price of the risk-free asset, considered as the numeraire, is set at 1, while the price of the risky asset is denoted as $p_t$.

If an investor allocates $\lambda_t$ units ($i.e.$, $\lambda_t p_t$) of their initial wealth of 1 to the risky asset, the wealth $W$ at time $t+1$ is given by:
\begin{align}\label{eq:W}
W &= (1 - \lambda_t p_t) (1 + r) + \lambda_t (p_{t+1} + r) \\
&= 1 + r + \lambda_t [p_{t+1} - p_t + (1 - p_t) r]
\end{align}
In the first line, the first term represents the investment in the risk-free asset $(1 - \lambda_t p_t)$ multiplied by its payoff $(1 + r)$, while the second term represents the investment in the risky asset $\lambda_t$ multiplied by its payoff $(p_{t+1} + r)$.
The expected value of $p_{t+1}$ determines the current price $p_t$ of the risky asset. We define the belief about the distribution of $p_{t+1}$ for an unbiased investor as:
\begin{align}\label{pt_ub}
p_{t+1} \sim N(\hat{p}, \sigma^2_p)
\end{align}
For the biased investor, the corresponding belief is defined as:
\begin{align}
&p_{t+1} \sim N(\hat{p} + \beta_t, \sigma^2_p) \label{pt_b} \\
&\beta_t \sim N(\hat{\beta}, \sigma^2_{\beta,t}) \label{rhot_b}\\
&\sigma^2_{\beta,t} = \theta\sigma^2_{\beta,t-1} + \eta_t \label{sigmat_b}
\end{align}
Here, $\eta_t$ is an independent and identically distributed random variable with a mean of zero and variance $\sigma^2_\eta$, and $0 < \theta < 1$ is assumed, meaning that $\sigma^2_{\beta,t}$ is weakly stationary, with its variance remaining constant. The process $\sigma^2_{\beta,t}$ is assumed to follow a first-order autoregressive~(AR) structure, which is a modification of the DeLong \textit{et al.} \cite{de1990noise} framework.

Next, we define the utility function of the investors using a Constant Absolute Risk Aversion~(CARA) form:
\begin{align}\label{eq:utility}
U(W) = -\exp(-2\gamma W)
\end{align}
where $\gamma > 0$ is the coefficient of absolute risk aversion. Under these settings, the equilibrium price is given by the following theorem:

\begin{theorem}[Impact of Bias on Equilibrium Price]\label{thm:eq}
Under the market conditions described above, the equilibrium price $p_t^{*}$ is given by:
\begin{align}
p_t^{*} = 1 + \frac{\mu (\beta_t - \hat{\beta})}{(1 + r)} + \frac{\mu \hat{\beta}}{r} - \frac{\nu\gamma}{r} \frac{\theta \mu^2 \sigma^2_{\beta,t}}{(1 + r)^2} - \frac{\nu\gamma}{r} c
\end{align}
where $\mu, \nu, c$ are constants. The second term indicates how far the current sentiment $\beta_t$ of the biased investors deviates from the long-term average sentiment $\hat{\beta}$. The third term represents the impact of the long-term average sentiment $\hat{\beta}$ on the stock price. The fourth term shows the impact of the variance~(risk) of the sentiment bias of the biased investors on the stock price.
\end{theorem}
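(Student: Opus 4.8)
The plan is to follow the CARA--normal demand-aggregation argument of DeLong et al.\ \cite{de1990noise}, adapted to the time-varying sentiment variance introduced in \eqref{sigmat_b}. First I would reduce each investor's problem to a mean--variance one: since the terminal wealth $W$ in \eqref{eq:W} is affine in the normal variable $p_{t+1}$, we have $E[-\exp(-2\gamma W)]=-\exp\!\big(-2\gamma(E[W]-\gamma\,\mathrm{Var}(W))\big)$, so maximizing expected utility is equivalent to maximizing the certainty equivalent $E[W]-\gamma\,\mathrm{Var}(W)$. Writing the risky asset's excess payoff as $p_{t+1}+r-p_t(1+r)$, the first-order condition in $\lambda_t$ gives the optimal holding
\begin{align}
\lambda_t=\frac{E[p_{t+1}]+r-p_t(1+r)}{2\gamma\,\mathrm{Var}(p_{t+1})}.
\end{align}

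Next I would instantiate this for the two types using \eqref{pt_ub}--\eqref{pt_b}: the unbiased investor uses the rational conditional mean of $p_{t+1}$, while the biased investor shifts it by the realized sentiment $\beta_t$, and both perceive the same one-step forecast variance $\mathrm{Var}(p_{t+1})$ because the biased agent conditions on its own $\beta_t$. Imposing market clearing $(1-\mu)\lambda_t^{u}+\mu\lambda_t^{b}=1$, where the supply is normalized so that the sentiment-free, risk-free price equals the fundamental value $1$, the common denominator cancels and the two means combine into a single aggregate shift $\mu\beta_t$, yielding
\begin{align}
p_t(1+r)=E[p_{t+1}]+r+\mu\beta_t-2\gamma\,\mathrm{Var}(p_{t+1}).
\end{align}

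The heart of the argument is to close this relation by rational expectations with a self-consistent affine guess $p_t=q_0+\tfrac{\mu}{1+r}(\beta_t-\hat\beta)+C\,\sigma^2_{\beta,t}$. The needed moments follow from \eqref{sigmat_b} and the law of total variance: conditional on time $t$, $E[\sigma^2_{\beta,t+1}]=\theta\sigma^2_{\beta,t}$ and $\mathrm{Var}(\beta_{t+1})=E[\sigma^2_{\beta,t+1}]=\theta\sigma^2_{\beta,t}$, while $\mathrm{Var}(\sigma^2_{\beta,t+1})=\sigma^2_\eta$ is constant. Substituting the guess, matching the $(\beta_t-\hat\beta)$ loading reproduces the second term $\tfrac{\mu}{1+r}$; matching constants gives $q_0=1+\tfrac{\mu\hat\beta}{r}-\tfrac{\nu\gamma}{r}c$, i.e.\ the fundamental $1$, the capitalized long-run sentiment $\tfrac{\mu\hat\beta}{r}$, and a constant risk discount with $c$ collecting $\sigma_p^2$ and the constant $C^2\sigma^2_\eta$ piece; and the forecast variance $\mathrm{Var}(p_{t+1})=c+\tfrac{\mu^2}{(1+r)^2}\theta\sigma^2_{\beta,t}$ feeds the time-varying risk premium $\tfrac{\nu\gamma}{r}\tfrac{\theta\mu^2\sigma^2_{\beta,t}}{(1+r)^2}$, with $\nu$ absorbing the factor $2$ from the risk aversion and the discounting produced when the $\sigma^2_{\beta,t}$ coefficients are matched.

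I expect the main obstacle to be the bookkeeping of the compound, time-varying variance rather than any isolated computation. The relevant risk is the equilibrium noise-trader risk $\mathrm{Var}(p_{t+1}\mid\mathcal{F}_t)$, which loads on the not-yet-realized future sentiment $\beta_{t+1}$ through the price map itself; obtaining the $\theta\sigma^2_{\beta,t}$ factor requires the nested expectation $E[\mathrm{Var}(\beta_{t+1}\mid\sigma^2_{\beta,t+1})\mid\mathcal{F}_t]$ via the AR(1) recursion, and closing the fixed point for $C$ (where the risk term in $p_t$ re-enters $E[p_{t+1}]$) is exactly what sets the discount factor folded into $\nu$. As a sanity check I would confirm that, when $\sigma^2_{\beta,t}$ is held constant, the formula collapses to the original DeLong et al.\ equilibrium price, verifying that $\mu,\nu,c$ have been identified correctly.
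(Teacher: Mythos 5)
Your proposal is correct, and its first half coincides with the paper's own proof: the CARA--normal reduction to the certainty equivalent $E[W]-\gamma\,\mathrm{Var}(W)$, the first-order condition $\lambda_t=\frac{E_t[p_{t+1}]+r-p_t(1+r)}{2\gamma\,\mathrm{Var}_t(p_{t+1})}$ with the biased demand shifted by $\frac{\beta_t}{2\gamma\,\mathrm{Var}_t(p_{t+1})}$, and market clearing giving $p_t(1+r)=r+E_t[p_{t+1}]+\mu\beta_t-2\gamma\,\mathrm{Var}_t(p_{t+1})$; you are in fact more careful than the paper here, since the paper writes the clearing condition as $\lambda^u_t+\lambda^b_t=1$ even though its next displayed equation only follows from the $(1-\mu,\mu)$-weighted version you impose. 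Where you genuinely diverge is in closing the rational-expectations fixed point. The paper iterates the pricing equation forward, \emph{assumes} $\sum_{k\geq 1}(1+r)^{-k}E_t(\sigma^2_{p_{t+k}})=\frac{\nu}{r}E_t(\sigma^2_{p_{t+1}})$ with $\nu$ an undetermined constant to be ``validated'' afterwards, and then gets $E_t(\sigma^2_{p_{t+1}})=\frac{\theta\mu^2\sigma^2_{\beta,t}}{(1+r)^2}+c$ by iterating the variance relation and discarding non-stationary solutions. You instead posit the affine form $p_t=q_0+\frac{\mu}{1+r}(\beta_t-\hat\beta)+C\sigma^2_{\beta,t}$, compute the conditional moments $E_t[\beta_{t+1}]=\hat\beta$, $\mathrm{Var}_t(\beta_{t+1})=E_t[\sigma^2_{\beta,t+1}]=\theta\sigma^2_{\beta,t}$ (law of total variance, with zero covariance between $\beta_{t+1}$ and $\sigma^2_{\beta,t+1}$), and $\mathrm{Var}_t(\sigma^2_{\beta,t+1})=\sigma^2_\eta$, and match coefficients. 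Both routes land on the stated formula, but yours buys something concrete: the proportionality the paper merely assumes is derived (the $\sigma^2_{\beta,t}$-coefficient equation $C(1+r)=C\theta-\frac{2\gamma\theta\mu^2}{(1+r)^2}$ gives $C=-\frac{2\gamma\theta\mu^2}{(1+r)^2(1+r-\theta)}$ explicitly, and $q_0=1+\frac{\mu\hat\beta}{r}-\frac{2\gamma C^2\sigma^2_\eta}{r}$), so $\nu$ and $c$ are identified rather than left implicit---including the factor of $2$ that the paper's proof carries in $\frac{2\nu\gamma}{r}$ but the theorem statement silently absorbs into $\nu$, which you correctly flag. Your closing sanity check (freezing $\sigma^2_{\beta,t}$ to recover the DeLong et al.\ price) is apt, and the guess-and-verify route is arguably the cleaner derivation of the two, since it never needs the paper's summation assumption.
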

\begin{proof}
See the appendix.
\end{proof}

The theorem suggests that theoretically, the impact of bias on stock prices can be either positive or negative. 
In the next section, we will empirically investigate how bias affects stock prices.

In this framework, we can consider the unbiased investors as skilled~(human) analysts, while the biased investors represent LLMs with company-specific sentiment bias. 
Under this interpretation, if biased LLMs become dominant in the market, their influence could significantly impact the equilibrium price. 
Specifically, in a market where biased LLMs play a dominant role, investment decisions would be driven by the sentiment provided by these models. As a result, the sentiment bias of the LLMs would be directly reflected in the market prices. If these LLMs consistently exhibit overly optimistic or pessimistic views, this could lead to upward or downward pressure on stock prices, thereby distorting the price formation in the market. Therefore, this theorem serves as a useful tool for quantitatively evaluating the impact of biased LLMs on the market.

\section{Experiments}
\label{sec:evaluation}
\subsection{Dataset}

In this study, we use the Japanese text data from Summary of Consolidated Financial and Business Results for the Year Ended~(Summary of Financial Results). 
Summary of Financial Results are a standardized financial report that publicly listed companies submit when they announce their earnings. 
The disclosure of financial information in Japan can be based on legal requirements, such as the financial statements published after shareholders' meetings under the Companies Act or annual securities reports filed within three months of the fiscal year-end under the Financial Instruments and Exchange Act.
On the other hand, the Summary of Financial Results is provided under the timely disclosure rules of the stock exchanges, offering investors important financial information as quickly as possible. It is required to be disclosed within 45 days, preferably within 30 days, after the fiscal year-end.

We obtained the data from the Timely Disclosure network~(TDnet\footnote{\url{https://www.release.tdnet.info/inbs/I_main_00.html}}), a service provided by the Tokyo Stock Exchange. 
The data include Summary of Financial Results published between January 2019 and December 2023 by companies listed on the Tokyo Stock Exchange. 
For this study, we focus on the texts on financial performance in the Summary of Financial Results that describe the company's performance, particularly those related to business results (as recorded in the income statement).

An example of such a part of description is provided below, taken from Nissan Motor Co., Ltd.'s financial statement for the fiscal year ending in March 2024:

\begin{itembox}[htpb]{\textbf{Example of texts on financial performance}}
In fiscal year 2023, the global industry volume increased by 8.6\% from the prior fiscal year to 84.54 million units. The Nissan Group (the "Group")'s global retail sales volume increased by 4.1\% from the prior fiscal year to 3,442 thousand units. While retail sales volume in regions excluding China, such as Japan, North America, and Europe, increased by
17.2\% from the prior fiscal year, retail sales volume in the China market declined. The Group’s market share decreased by 0.1 percentage points from the prior fiscal year to 4.1\%.  
\end{itembox}

\subsection{Large Language Model}

This section provides a detailed overview of the LLMs used in our study.

\begin{description}
\item[GPT-4o:]~\\
A multimodal LLM released by OpenAI in May 2024\footnote{\url{https://openai.com/index/hello-gpt-4o/}}. This model is an advanced version of GPT-4\cite{achiam2023gpt}., incorporating both text and image processing capabilities. For this study, however, we focused solely on the text-based functions to align with our analysis needs, excluding the multimodal features.
\item[GPT-3.5-turbo:]~\\  A model released by OpenAI in 2023\footnote{\url{https://openai.com/blog/chatgpt/}}, also known as ChatGPT. This model is known for its efficient processing and conversational capabilities compared to GPT-4.
\item[gemini-1.5-pro-001:]~\\ The latest version of the Gemini series\cite{gemini}, released by Google in 2024. This model is designed for high-performance tasks, offering robust capabilities in text generation and comprehension.
\item[gemini-1.5-flash-001:]~\\  A lightweight version of the Gemini series\cite{gemini}, also released by Google. This version is optimized for faster processing and reduced computational load, making it ideal for applications requiring quick responses with limited resources.
\item[claude-3.5-sonnet:]~\\ The latest model in the Claude series, version 3.5, released by Anthropic as of August 2024\cite{anthropic2024claude}.
\item[claude-3-Haiku:]~\\ A lightweight version of the Claude series released by Anthropic. In version 3, Claude 3 Opus, Sonnet, and Haiku are listed in order of performance, with Haiku being the lightest, fastest, and most cost-effective option\cite{anthropic2024claude}.
\item[Qwen2-7B:]~\\ As a representative of local LLMs, we used Qwen2-7B\cite{qwen2}, a model in the 7B class that can be run efficiently on local machines. This model was selected due to its outstanding performance on the Japanese financial benchmark\cite{Hirano2023-finnlpkdf}, making it a strong candidate for tasks requiring specialized language processing in Japanese.
\end{description}
All models except Qwen2-7B were accessed via API.

\subsection{RQ1:~Evaluation of the company-specific bias}

Here, we focus on the RQ1.
First, we asked each LLMs to rate a sentence related to company performance on a 5-point scale, using prompts that either included or excluded the company name. 
We restricted the output to 10 tokens and checked whether the output included a number between 1 and 5. If multiple numbers appeared, we used the first one as the answer. If no number was present, we marked it as no response and excluded it from the analysis.

We then calculated the bias value defined in Equation \ref{eq:bias} only for cases where we obtained valid ratings from both types of prompts (with and without the company name). The frequency distribution of these bias values is shown in Table \ref{tab:RQ1}. Note that cases where a rating could not be obtained were excluded from the total count, so the total numbers do not match exactly.

\begin{table*}[htbp]
    \centering
    \caption{Results of sentiment bias for each LLMs.}
    \begin{tabular}{lccccccccc}
         Model Name& +4 & +3 & +2 & +1 & $\pm 0$ & -1 & -2 & -3 & -4\\ \hline \hline
         GPT-4o&  &  & 7 & 1328 & 8421 & 486 & 7 &  & \\
         GPT-3.5-turbo &  & 3 & 70 & 1530 & 6904 & 1698 & 32 & 3 & 1\\
         gemini-1.5-pro-001 & & & 10 & 755 & 8573 & 886 & 15 & 1 & \\
         gemini-1.5-flash-001 & & & 1 & 752 & 9067 & 429 & & &\\
         claude-3.5-sonnet & & 2 & 7 & 831 & 8660 & 449 & 1 & & \\
         claude-3-haiku & & 4 & 9 & 296 & 8227 & 1595 & 43 & 18 & \\
         Qwen2-7B & 2 & 4 & 6 & 9 & 1824 & 6 & 7 & 8 & 10 \\ \hline
    \end{tabular}
    \label{tab:RQ1}
\end{table*}

In Table \ref{tab:RQ1}, each bias value, ranging from +4 to -4, indicates the degree of sentiment bias, with positive values representing a more favorable bias and negative values indicating a less favorable bias.
We can observe that each LLM exhibits a bias in approximately 10-20\% of cases, as indicated by the values deviating from $\pm 0$. This confirms the presence of company-specific bias in these models.
Additionally, when comparing the different LLMs, we find that higher-performing models tend to have a higher proportion of $\pm 0$ values, and their bias values are more narrowly distributed. This suggests that better-performing models are more effective at minimizing bias.
It is also noteworthy that the local LLM, Qwen2-7B, has a higher number of instances where it failed to provide a valid evaluation (response) compared to the other models.

\subsection{RQ2:~Evaluation of characteristics of company with bias}
Next, we move onto the RQ2.
In this section, we analyze the relationship between the company-specific bias identified in the previous section and various company characteristics. 
To represent company characteristics, we use 20 factors provided by the MSCI Barra Japan Equity Model~(JPE4)\footnote{\url{https://www.msci.com/www/research-report/model-insight-barra -japan/016268959}}.

The JPE4 model calculates 20 key company characteristics (exposures) for each Japanese company, which are widely used by institutional investors for stock selection and portfolio risk management in the Japanese market. 
In our study, we obtained the exposure values for each stock as of the end of the month before the Results of Financial Summary announcement. 
We then categorized the company-specific bias $\beta$ into three groups: positive, neutral, and negative. For each group, we calculated the average exposure at the time of the earnings announcement.

The average exposure $\overline{E}_k$ for each bias group $k$ (positive, neutral, negative) is calculated using the following formula:

\begin{align}   
    \overline{E}_k = \frac{1}{n_k} \sum_{i=1}^{n_k} E_{i,k}
\end{align}

where $n_k$ is the number of companies in group $k \in \{positive, neutral,negative \}$ , and $E_{i,k}$ represents the exposure of company $i$ in group $k$.

We also calculate the spread of positive and negative

\begin{align}   
    \overline{S} = \overline{E}_{positive} - \overline{E}_{negative} 
\end{align}

\begin{table}[t]
\centering
\caption{Exposures of companies categorized by the bias $\beta$ (positive, neutral, or negative) using GPT-3.5}
\begin{tabular}{lcccc}
&Positive&Neutral&Negative&Spread \\ \hline \hline
Short Term Reversal&-0.01 &-0.05 &-0.11 &0.10 \\
Beta&-0.05 &-0.02 &-0.02 &-0.03 \\
NK225&0.09 &0.10 &0.12 &-0.03 \\
Size&-2.02 &-1.92 &-1.80 & \textbf{-0.22} \\
Residual Volatility&0.36 &0.40 &0.43 &-0.07 \\
Liquidity&-0.26 &-0.25 &-0.16 &-0.10 \\
Momentum&-0.46 &-0.31 &-0.14 & \textbf{-0.32} \\
Non-Linear Size&-1.10 &-0.95 &-0.69 & \textbf{-0.41} \\
Leverage&-0.20 &-0.24 &-0.20 &-0.00 \\
Value&0.59 &0.48 &0.36 & \textbf{0.23} \\
Macro Sensitivity&-0.02 &-0.05 &-0.03 &0.01 \\
Long Term Reversal&0.36 &0.27 &0.23 &0.14 \\
Foreign Sensitivity&-0.53 &-0.53 &-0.59 &0.06 \\
Sentiment&0.07 &0.13 &0.15 &-0.08 \\
Earnings Yield&0.22 &0.17 &0.13 &0.09 \\
Management&-0.17 &-0.11 &-0.07 &-0.11 \\
Industry Momentum&0.01 &0.01 &0.05 &-0.04 \\
Growth&-0.09 &-0.09 &-0.05 &-0.04 \\
Earnings Quality&0.15 &0.07 &-0.00 &0.16 \\
Prospect&-0.61 &-0.61 &-0.59 &-0.02 \\ \hline
\end{tabular}
\label{tab:exp_gpt3.5}
\end{table}

Table \ref{tab:exp_gpt3.5} shows the exposures of companies categorized by the bias $\beta$ (positive, neutral, or negative) using GPT-3.5 for sentiment analysis. 
It also presents the spreads between the companies with positive and negative bias.

\begin{description}
\item[Size, Nonlinear Size:]~\\ We observe that companies with a positive bias have a size exposure of -2.02, while those with a negative bias have a size exposure of -1.80. Similarly, the nonlinear size exposure is -1.10 for companies with a positive bias and -0.69 for those with a negative bias. This suggests that GPT-3.5 tends to negatively evaluate smaller companies when the company name is included in the prompt.
\item[Momentum:]~\\ We find that the momentum exposure is -0.46 for companies with a positive bias, compared to -0.14 for those with a negative bias. This indicates a bias in favor of companies with lower recent stock returns.
\item[Value:]~\\ We observe that the value exposure is 0.59 for companies with a positive bias and 0.36 for those with a negative bias. This suggests a positive bias toward value stocks.
\end{description}

Furthermore, for each LLMs, we calculated the spread between the groups of companies with positive and negative bias, as shown in Table \ref{tab:exposure}.

There are several key observations:

\begin{description}
\item[Variation Across Models:]~\\
We observe that the exposure spreads differ widely across the various LLMs, suggesting that each model may have its own biases towards certain company characteristics. For example, Qwen2-7B shows large positive spreads in factors like Size and Non-Linear Size, which suggests it favors larger companies. In contrast, GPT-3.5 tends to show negative spreads in these factors, indicating a bias against smaller companies.

\item[Consistent Exposure:]~\\
We also notice that some factors, such as Value and Momentum, show consistent biases across multiple models. For instance, GPT-4o and GPT-3.5 display opposite biases in the Value factor, with GPT-4o showing a negative spread and GPT-3.5 showing a positive one. This suggests that different models may interpret similar financial characteristics in different ways.

\item[Model Performance and Exposure:]~\\
We find that models with more advanced architecture, like GPT-4o and Gemini-1.5-pro, generally have smaller spreads, which could mean they provide more balanced evaluations compared to less advanced models. This supports the idea that higher-performing models might be better at reducing bias in sentiment analysis.
\end{description}

Overall, the table shows how different LLMs can exhibit varying levels of bias towards specific company characteristics, which can affect their sentiment analysis results. Understanding these biases is important for correctly interpreting model outputs in financial applications.

\begin{table*}[t]
    \centering
\caption{Results of exposure spread for each LLMs}
  \scalebox{1}{
    \begin{tabular}{lccccccccc} & GPT-4o & GPT-3.5 &  \begin{tabular}{c}
Gemini-\\1.5-pro\end{tabular} &\begin{tabular}{c}
Gemini-\\1.5-flash\end{tabular} &\begin{tabular}{c}
claude-\\3.5-sonnet\end{tabular} &  \begin{tabular}{c}
claude-\\3-haiku\end{tabular} & Qwen2-7b & Average \\ \hline
Short Term Reversal &  -0.006  & 0.104  &  0.040  &-0.101  & 0.022 & 0.056  & 0.329 & 0.070 
\\ 
Beta &  0.018  & -0.033  &0.025  &0.017  &  -0.001 & 0.047  & 0.085 & 0.015 
\\ 
NK225 &  -0.020  &-0.030  &  0.009  &-0.011  &-0.025 & 0.074  &  0.237 & 0.006 
\\ 
Size &  0.120  &-0.219  & 0.068  &0.017  &  0.053 &  0.330 &0.629 & 0.097 
\\ 
Residual Volatility & 0.044  &-0.072  &   -0.092  & -0.072  & -0.124 &0.020  & -0.082 & -0.066 
\\ 
Liquidity &  0.014  &-0.099  & -0.072  &-0.114  &  -0.156 & 0.141  & 0.185 & -0.034 
\\ 
Momentum & 0.332  &-0.319  &   0.014  &-0.049  & -0.088 &  -0.055 &0.124 & -0.010 
\\ 
Non-Linear Size &  0.266  &-0.407  &  0.047  &0.067  & -0.015 &  0.381  &0.787 & 0.129 
\\ 
Leverage & -0.121  &-0.002  &   -0.071  &-0.067  & -0.133 & 0.008  &0.421 & -0.014 
\\ 
Value & -0.328  & 0.230  & 0.038 &-0.016  &   -0.111 & -0.157    &-0.067 & -0.041 
\\ 
Macro Sensitivity & -0.044  &0.014  &   -0.098  &-0.161  & 0.002 &  -0.046  &0.109 & -0.049 
\\ 
Long Term Reversal & -0.163  &0.136  &   -0.005  &0.002  & -0.086 & 0.067  & 0.003 & 0.010 
\\ 
Foreign Sensitivity & -0.022  &0.056  &  0.083  &0.001  & -0.081 & 0.093  &  0.073 & 0.041 
\\ 
Sentiment & 0.171  &-0.082  &0.034 &  0.078 &  -0.142  &  -0.124   & -0.347 &-0.057 
\\ 
Earnings Yield & 0.068  &0.094  &   0.198  &0.052  & 0.097 & -0.091  & 0.255 & 0.082 
\\ 
Management & -0.036  & -0.109  & 0.097  & 0.018  & -0.058 & -0.135  & -0.101 & -0.065 
 \\ 
Industry Momentum & 0.075  &-0.039  &  0.013  &  -0.027  &-0.093 &-0.070  &  0.043 & 0.011\\ 
Growth & 0.259  & -0.039  &  0.080  & 0.058  &0.233 & 0.125 & -0.056 & 0.070 
 
\\ 
Earnings Quality &  -0.115  & 0.157  &  -0.107  &-0.114  & -0.090 &  0.005  &0.297 & 0.014 
 \\ 
Prospect &  -0.022  & -0.018  &  0.024  & 0.001  &0.111 & 0.205  & 0.269 & 0.072 
\\ \hline
    \end{tabular}
    }
    \label{tab:exposure}
\end{table*}

\subsection{RQ3:~Evaluation of Stock Performance}

Finally, we analyze how company-specific bias at the time of the Summary of Financial Results announcements impacts future stock performance. As discussed in Section \ref{sec:theory}, the theoretical impact of such bias on stock performance can be either positive or negative.

Similar to our exposure analysis, we group the announcements by whether the company-specific bias was positive, neutral, or negative, and then calculate the average cumulative abnormal return~(CAR) for each group and each LLM.
This analysis is conducted within the framework of an event study\cite{mackinlay1997event,mackinlay2012econometrics}. 
An event study is a methodology used to assess the effect of a specific event, such as an earnings announcement, on stock prices. By examining abnormal returns~(AR) around the event date, we can determine whether the event has a statistically significant impact on stock performance that deviates from what would be expected under normal market conditions.

To measure the impact of the Summary of Financial Results announcements beyond what can be explained by normal market conditions, we focus on the difference between the actual stock return and the expected return which is known as the AR. 
The AR for stock $i$ on day $t$ after the earnings announcement is given by:

\begin{align}
    AR_{i,t} = r_{i,t} - \hat{r}_{i,t}
\end{align}
where $r_{i,t}$ is the actual return, and $\hat{r}_{i,t}$ is the expected return. 
In this study, we estimate the expected return $\hat{r}_{i,t}$ using the Fama-French 5-factor~(FF5) model~\cite{fama2015five}, which is expressed as:
\begin{align}    
    \hat{r}_{i,t} - r_f = &\alpha_i + \beta_{i,m} (r_m - r_f) + \beta_{i,s} SMB_t + \beta_{i,h}  HML_t \nonumber\\
    & + \beta_{i,r} RMW_t + \beta_{i,c} CMA_t + \epsilon_{i,t}
\end{align}
where $\hat{r}_{i,t}$ is the expected return of stock $i$ at time $t$,$r_f$ is the risk-free rate,$r_m$ is the return of the market portfolio, $SMB_t$ (Small Minus Big) captures the size premium, $HML_t$ (High Minus Low) captures the value premium, $RMW_t$ (Robust Minus Weak) captures the profitability factor,$CMA_t$ (Conservative Minus Aggressive) captures the investment factor\footnote{These factor data were obtained from \url{https://mba.tuck.dartmouth.edu/pages/faculty/ken.french/data_library.html}} and $\epsilon_{i,t}$ is the error term.
$\alpha_i$ is the stock-specific intercept,$\beta_{i,m},\beta_{i,s},\beta_{i,h},\beta_{i,r},$ and $\beta_{i,c}$ are the sensitivities of the stock $i$ to the respective factors.
These parameters are estimated through regression analysis.
We set the estimation window for the model from 130 trading days before the earnings announcement to 11 trading days before (a 120-day period).

We then calculate the AR for each stock over the 60 trading days following the announcement, as well as the CAR from day 0:

\begin{align}
    CAR_{k,t} = \sum_{j=0}^{t} \sum_{i=1}^{n_k} AR_{i,j}
\end{align}
where $n_k$ is the number of companies in group $k \in \{positive, neutral,negative \}$ , and $AR_{i,j}$ represents the AR of company $i$ at time $j$.

\begin{table}[t]
    \centering
    \caption{CAR~(Positive group)}
    \begin{tabular}{lcccc}
Model Name & 1day & 10days & 30days & 60days\\ \hline
GPT-4o & 0.04\% & -0.31\% & -0.46\% & -0.31\%\\
GPT-3.5-turbo & $-0.24\%^*$ & -0.35\% & -0.11\% & 0.25\%\\
gemini-1.5-pro-001 & -0.03\% & $-0.60\%^*$ & $-0.77\%^*$ & -0.61\%\\
gemini-1.5-flash-001 & -0.13\% & -0.06\% & -0.31\% & 0.08\%\\
claude-3.5-sonnet & -0.32\% & $-0.67\%^*$ & $-0.78\%^{**}$ & 0.25\%\\
claude-3-haiku & 0.26\% & 0.72\% & $1.35\%^*$ & $1.71\%^*$\\
Qwen2-7B & 0.79\% & 1.00\% & 0.81\% & -0.94\% \\ \hline
\multicolumn{2}{c}{\footnotesize $*: p<.1, **: p < .05$}
    \end{tabular}
    \label{tab:car_positive}
\end{table}

\begin{table}[]
    \centering
    \caption{CAR~(Negative group)}
    \begin{tabular}{lcccc}
Model Name & 1day & 10days & 30days & 60days\\ \hline
GPT-4o & -0.26\% & $-0.71\%^*$ & $-0.94\%^*$ & -0.37\%\\
GPT-3.5-turbo & 0.02\% & $-0.40\%^*$ & $-0.89\%^{**}$ & $-1.15\%^{**}$\\
gemini-1.5-pro-001 & -0.30\% & $-0.96\%^{**}$ & $-0.96\%^{**}$ & -0.66\%\\
gemini-1.5-flash-001 & 0.20\% & 0.41\% & 0.46\% & 0.60\%\\
claude-3.5-sonnet & 0.07\% & -0.23\% & 0.19\% & 0.72\%\\
claude-3-haiku & 0.06\% & -0.04\% & -0.16\% & -0.18\%\\
Qwen2-7B & -1.45\% & -2.42\% & -2.92\% & -4.62\% \\ \hline
\multicolumn{2}{c}{\footnotesize $*: p<.1, **: p < .05$}
    \end{tabular}
    \label{tab:car_negative}
\end{table}

\begin{table}[t]
    \centering
    \caption{CAR~(Positive group - Negative group)}
    \begin{tabular}{lcccc}
Model Name & 1days & 10days & 30days & 60days\\ \hline
GPT-4o & 0.30\% & 0.40\% & 0.48\% & 0.05\%\\
GPT-3.5-turbo & -0.26\% & 0.05\% & 0.77\% & $1.40\%^{**}$\\
gemini-1.5-pro-001 & 0.27\% & 0.36\% & 0.19\% & 0.04\%\\
gemini-1.5-flash-001 & -0.33\% & -0.47\% & -0.76\% & -0.51\%\\
claude-3.5-sonnet & -0.39\% & -0.43\% & -0.97\% & -0.47\%\\
claude-3-haiku & 0.20\% & 0.77\% & $1.51\%^*$ & $1.89\%^*$\\
Qwen2-7B & 2.25\% & 3.43\% & 3.73\% & 3.68\%\\ \hline
\multicolumn{2}{c}{\footnotesize $*: p<.1, **: p < .05$}
    \end{tabular}
    \label{tab:car_spread}
\end{table}
\begin{figure}
    \centering
\includegraphics[width=\linewidth]{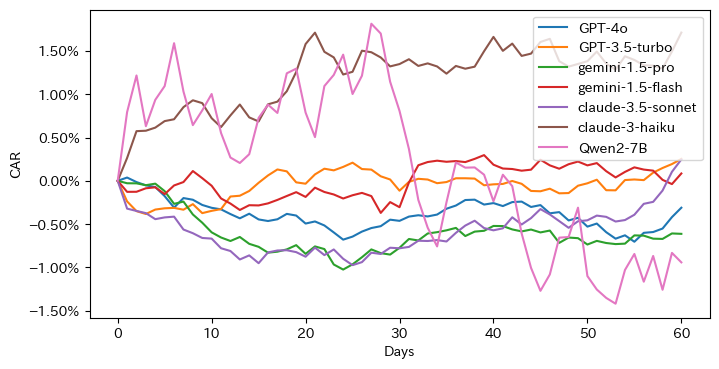}
    \caption{Transition of the CAR~(Positive group) over a 60-day period for various LLMs}
    \label{fig:car_pos}
\end{figure}
\begin{figure}
    \centering
\includegraphics[width=\linewidth]{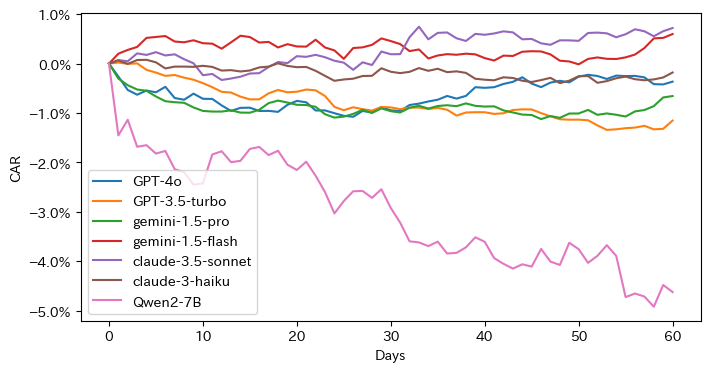}
    \caption{Transition of the CAR~(Negative group) over a 60-day period for various LLMs}
    \label{fig:car_neg}
\end{figure}

Table~\ref{tab:car_positive} and \ref{tab:car_negative} present the CAR for companies categorized by positive and negative company-specific bias, across different LLMs. 
Table~\ref{tab:car_spread} shows the spread of CAR between positive and negative company-specific bias.
The symbols $*: p<.1, **: p < .05$ in Tables indicate the $p$-values from statistical tests to determine whether the values differ significantly from zero.
We analyze these results over various time periods (1 day, 10 days, 30 days, and 60 days) following the announcement of the Summary of Financial Results. 

For Table~\ref{tab:car_positive}, we observed that most models show negative CARs over time. For example, the CAR for GPT-4o and gemini-1.5-pro-001 steadily decreases over the 60-day period, indicating a gradual decline in stock performance. However, Qwen2-7B follows a different pattern. This model initially shows a strong positive CAR (+0.79\%) on the first trading day and maintains a positive CAR (+1.00\%) over the first 10 days. But by the 60-day mark, the CAR turns negative (-0.94\%).

In contrast, from Table~\ref{tab:car_negative}, companies with a negative bias experience a more pronounced decline in CARs across multiple models. GPT-3.5-turbo and gemini-1.5-pro-001, in particular, show significant negative CARs over the 30-day and 60-day periods. The decline in CAR for these models is statistically significant, suggesting that negative bias has a substantial and lasting adverse effect on stock performance. For example, GPT-3.5-turbo records a CAR of -1.15\% at 60 days with \(p < 0.05\). Qwen2-7B again shows a distinct trend, with a steep and continuous decline in CAR from -1.45\% on day 1 to -4.62\% by day 60.
\begin{figure}
    \centering
\includegraphics[width=\linewidth]{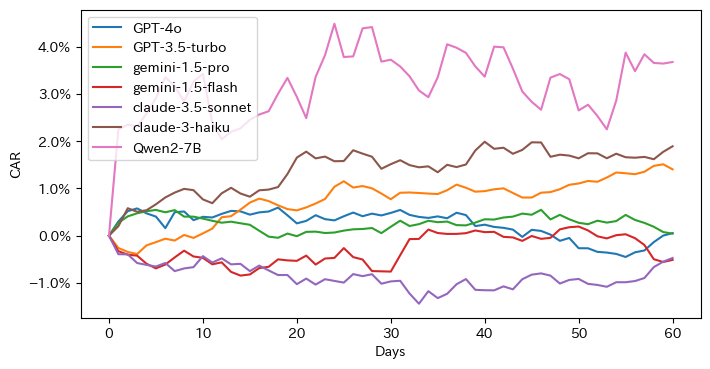}
    \caption{Transition of the CAR~(Positive group - Negative group) over a 60-day period for various LLMs}
    \label{fig:car_spread}
\end{figure}
From Table~\ref{tab:car_spread}, the results show different patterns across the models. For instance, GPT-3.5-turbo shows a significant positive spread at 60 days (+1.40\%, \(p < 0.05\)), suggesting that, despite initial negative market reactions, companies with a positive bias tend to recover better over time compared to those with a negative bias. On the other hand, models like gemini-1.5-flash-001 and claude-3.5-sonnet consistently show negative spreads, indicating that companies with a negative bias continue to underperform relative to those with a positive bias, even over longer periods. In contrast, Qwen2-7B shows a substantial positive spread across all time periods, peaking at +3.68\% at 60 days, suggesting that this model significantly amplifies the differences in stock performance between positively and negatively biased companies.

Fig. \ref{fig:car_pos},\ref{fig:car_neg}, and \ref{fig:car_spread} show the transition of the positive group, negative group and positive - Negative group CARs over a 60-day period for various LLMs, respectively. 
Each line represents the CAR performance of a specific LLM. 
The plot highlights the differing capabilities of these LLMs in capturing sentiment-driven market reactions over time.

These results suggest that bias can have both positive and negative effects on stock prices as Theorem~\ref{thm:eq} indicates.

\section{Conclusion}
\label{sec:conclusion}
In this study we investigate to evaluate the sentiment of financial texts using LLMs and to empirically determine whether LLMs exhibit company-specific biases in sentiment analysis.
The contributions of this study are as follows:
\begin{description}
\item[\textbf{Empirical demonstration of company-specific bias:}]~\\
We empirically demonstrated the presence of company-specific bias in sentiment analysis using multiple LLMs. Specifically, we compared sentiment scores generated with prompts that include company names to those without, quantifying the impact of company names on sentiment evaluation.
\item[\textbf{Development of an economic model for biased investors}]~\\
We developed a theoretical economic model to evaluate the impact of company-specific bias on investor behavior and market prices. This model provides a theoretical foundation for analyzing how biased and unbiased investors coexist in the market and how this affects stock prices.
\item[\textbf{Assessment of bias impact:}]~\\ We conducted an empirical analysis using real financial data to examine the extent to which company-specific bias affects company characteristics and stock performance.
\end{description}

For future research, while we focused on data from the Japanese stock market in this study, it is also important to investigate company-specific biases in different markets, such as those in North America, Europe, or Asia. Since each market may have distinct characteristics and investor behaviors, we believe it is necessary to quantify and evaluate these differences.
Additionally, although LLMs generally support multiple languages, we find it crucial to examine the presence of biases across different languages. For example, we suggest assessing the consistency of sentiment evaluations between English and Japanese and determining whether unique biases exist in each language.


In this Appendix we give the proof of the theorem~\ref{thm:eq}.
\begin{proof}
When prices follow a normal distribution, the expected utility maximization problem in Equation \eqref{eq:utility} is equivalent to maximizing the following expression\cite{markowitz1991foundations}:
\begin{align}
    \omega - \gamma \sigma^2_\omega
\end{align}
where $\omega$ represents the expected wealth, and $\sigma^2_\omega$ is the variance of wealth one period ahead.

The unbiased investor chooses the portfolio of the risky asset, $\lambda^u_t$, to maximize their expected utility:
\begin{align}
    E(U) &= c_0 + \lambda^u_t [r + E_t(p_{t+1}) - p_t(1 + r)] \nonumber\\
    &- \gamma (\lambda^u_t)^2 E_t(\sigma^2_{p_{t+1}})
\end{align}
where the expected volatility of the risky asset’s price from time $t$ to $t+1$ is defined as:
\begin{align}
    E_t(\sigma^2_{p_{t+1}}) = E_t[(p_{t+1} - E_t(p_{t+1}))^2]
\end{align}
On the other hand, the biased investor selects the portfolio of the risky asset, $\lambda^b_t$, to maximize their expected utility:

\begin{align}
E(U) &= c_0 + \lambda^b_t [r + E_t(p_{t+1}) - p_t(1 + r)] \nonumber \\
     & - \gamma (\lambda^b_t)^2 E_t(\sigma^2_{p_{t+1}}) + \lambda^b_t \beta_t
\end{align}

Solving the above optimization problem using the first-order condition, the portfolio holdings of the risky asset for the unbiased and biased investors are:
\begin{align}
&\lambda^u_t = \frac{r + E_t(p_{t+1}) - p_t(1 + r)}{2\gamma E_t(\sigma^2_{p_{t+1}})} \label{lambda_u}\\
&\lambda^b_t = \frac{r + E_t(p_{t+1}) - p_t(1 + r)}{2\gamma E_t(\sigma^2_{p_{t+1}})} + \frac{\beta_t}{2\gamma E_t(\sigma^2_{p_{t+1}})} \label{lambda_b}
\end{align}

The equilibrium price $p_t$ is derived from the market clearing condition, where the sum of the risky asset holdings of both the unbiased and biased investors equals 1 ($\lambda^u_t + \lambda^b_t = 1$). Using Equations \eqref{lambda_u} and \eqref{lambda_b}, we obtain:
\begin{align}
    p_t = \frac{1}{1 + r} [r + E_t(p_{t+1}) - 2\gamma E_t(\sigma^2_{p_{t+1}}) + \mu \beta_t]
\end{align}
To compute the equilibrium price, $p_{t+1}$ is recursively calculated. Based on the biased investor’s misperception described in Equations \eqref{pt_b} and \eqref{sigmat_b}, we assume:
\begin{align}
\sum_{k=1}^\infty \frac{1}{(1 + r)^k} E_t(\sigma^2_{p_{t+k}}) = \frac{\nu}{r} E_t(\sigma^2_{p_{t+1}})
\end{align}
which simplifies to:
\begin{align}
p_t = 1 + \frac{\mu (\beta_t - \hat{\beta})}{1 + r} + \frac{\mu \hat{\beta}}{r} - \frac{2\nu\gamma}{r} E_t(\sigma^2_{p_{t+1}})
\end{align}
Here, since $\hat{\beta}, \gamma, r, \mu, \nu$ are constants, only the second and third terms on the right-hand side are random variables. Therefore, the variance is given by:
\begin{align}
\sigma^2_{p_t} = \frac{\mu^2 \sigma^2_{\beta,t}}{(1 + r)^2} + \frac{4\nu^2 \gamma^2}{r^2} \text{var}[E_t(\sigma^2_{p_{t+1}})]
\end{align}
By iterating $E_t(\sigma^2_{p_{t+1}})$ and eliminating non-stationary prices ($\sigma^2_{p,\infty}, \infty$), we obtain:
\begin{align}
    E_t(\sigma^2_{p_{t+1}}) = \frac{\theta \mu^2 \sigma^2_{\beta,t}}{(1 + r)^2} + c
\end{align}
where \(c\) is a positive constant.

Thus, the assumption:
\begin{align}
\sum_{k=1}^\infty \frac{1}{(1 + r)^k} E_t(\sigma^2_{p_{t+k}}) = \frac{\nu}{r} E_t(\sigma^2_{p_{t+1}})
\end{align}
can be validated by determining the constant $\nu$.

Finally, the equilibrium price $p_t^*$ is given by:
\begin{align}
p_t^* = 1 + \frac{\mu (\beta_t - \hat{\beta})}{(1 + r)} + \frac{\mu \hat{\beta}}{r} - \frac{2\nu\gamma}{r} \frac{\theta \mu^2 \sigma^2_{\beta,t}}{(1 + r)^2} - \frac{2\nu\gamma}{r} c
\end{align}
\end{proof}

\end{document}